\newtheorem{Assump}{Assumption}
\newtheorem{Prop}{Proposition}
\newtheorem{lemma}{Lemma}
\title{The Impact of Cut Layer Selection in Split Federated Learning}
\author{
    Justin Dachille\textsuperscript{\rm 1},
    Chao Huang\textsuperscript{\rm 2},
    Xin Liu\textsuperscript{\rm 1}
}
\begin{document}
\maketitle
\begin{abstract}
Split Federated Learning (SFL) is a distributed machine learning paradigm that combines federated learning and split learning. In SFL, a neural network is partitioned at a cut layer, with the initial layers deployed on clients and remaining layers on a training server. There are two main variants of SFL: SFL-V1 where the training server maintains separate server-side models for each client, and SFL-V2 where the training server maintains a single shared model for all clients. While existing studies have focused on algorithm development for SFL, a comprehensive quantitative analysis of how the cut layer selection affects model performance remains unexplored. This paper addresses this gap by providing numerical and theoretical analysis of SFL performance and convergence relative to cut layer selection. We find that SFL-V1 is relatively invariant to the choice of cut layer, which is consistent with our theoretical results. Numerical experiments on four datasets and two neural networks show that the cut layer selection significantly affects the performance of SFL-V2. Moreover, SFL-V2 with an appropriate cut layer selection outperforms FedAvg on heterogeneous data.
\end{abstract}

\section{Introduction}
Federated Learning (FL) is a popular approach to collectively train machine learning models while preserving data privacy among clients such as mobile phones, IoT devices, and edge devices. \cite{mcmahan2017communication, brisimi2018federated}. In conventional FL, clients train models in parallel and then upload their models to a coordinating server for aggregation. However, FL faces significant computational and communication challenges in resource-constrained environments, as each client needs to train the entire model on-premise and communicate the full model to the server \cite{pmlr-v119-hamer20a, caldas2018expanding}. This limitation becomes particularly critical as modern machine learning models, especially Large Language Models (LLMs), continue to grow to billions of parameters and beyond \cite{minaee2024large}.

To address these limitations, Split Learning (SL) is a promising solution \cite{gupta2018distributed, vepakomma2018split}. In SL, a neural network is typically split into two parts where the clients train the initial layers of the neural network (e.g., feature extraction layers) and send activations to the server. The server completes the training with the remaining layers and sends back gradients which the client uses to finish back propagation on its client side network. By splitting the model, SL reduces the computational burden on clients. Additionally, studies have shown that SL can outperform FL in terms of communication efficiency as the number of clients increases, since only layer activations rather than full model parameters need to be transmitted \cite{singh2019detailed}. However, the sequential relay-based training of clients results in prohibitively long training times and inherently reduced scalability. Moreover, SL suffers from catastrophic forgetting, where the model tends to forget previously learned features when training on new client data \cite{duan2022combined}.

These limitations motivated the development of Split Federated Learning (SFL) \cite{thapa2022splitfed}. SFL aims to combine the advantages of both FL and SL while minimizing their respective drawbacks. By leveraging parallel training from FL and reduced client-side computation from SL, SFL offers a promising solution for efficient and privacy-preserving distributed learning in resource-constrained environments. We summarize the differences between these three algorithms in Fig.~\ref{fig:fl-differences}.

\begin{figure*}[t]
    \centering
    \includegraphics[width=0.9\textwidth]{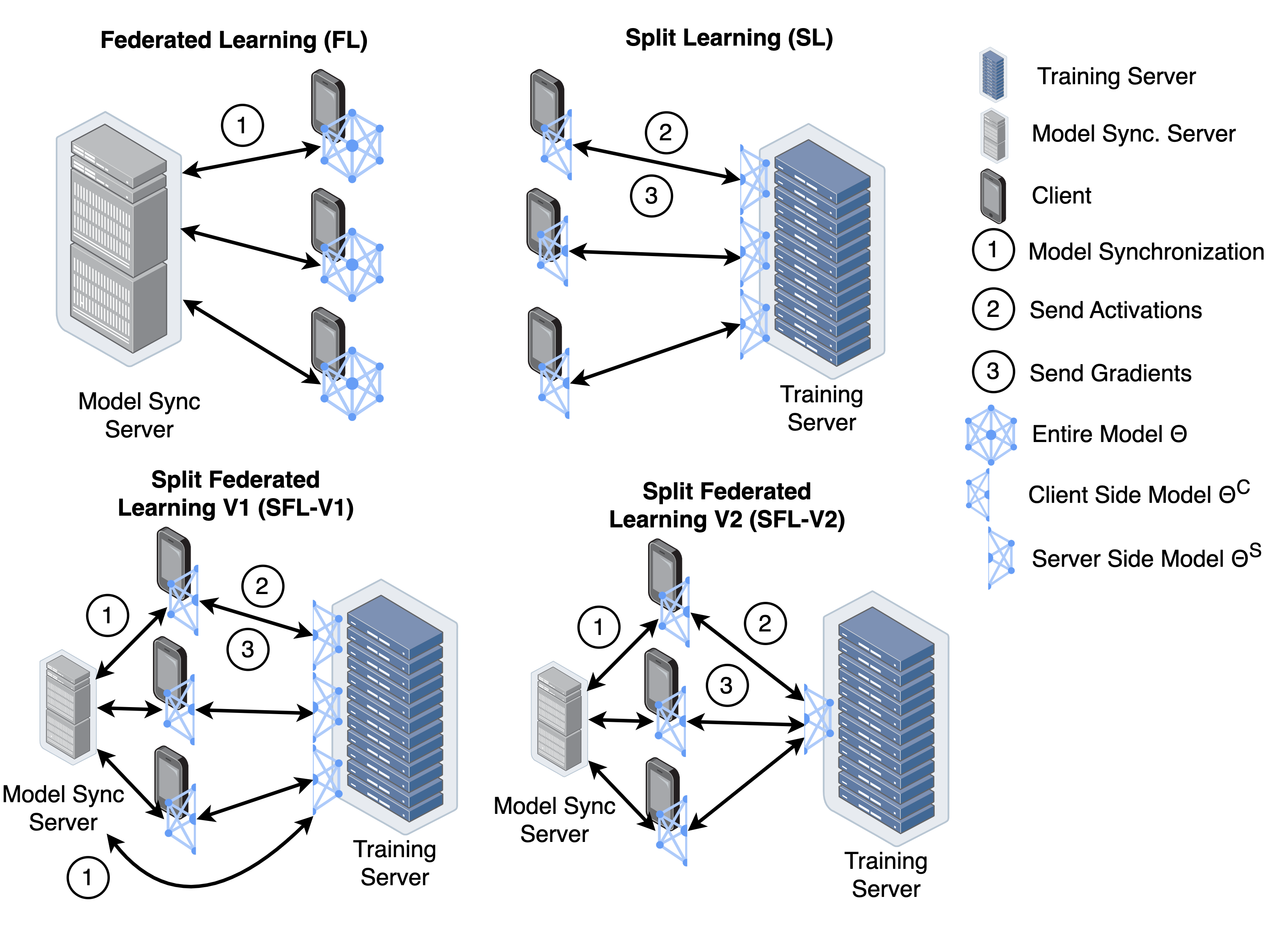}
    \caption{Comparison of distributed learning architectures. The Model Sync Server maintains model consistency across clients, while the Training Server handles model computations. In FL (top left), clients train complete model copies locally and periodically synchronize with the Model Sync Server (1). In SL (top right), the model $\theta$ is split at a cut layer into client-side ($\theta^c$) and server-side ($\theta^s$) components, where clients sequentially take turns: each client computes forward activations up to the cut layer (2), the server completes the forward pass and backpropagation to the cut layer, and the client finishes the backward pass using returned gradients (3). SFL comes in two variants: SFL-V1 (bottom left) and SFL-V2 (bottom right). Both variants split the model and maintain client-side synchronization through FedAvg (1), but differ in server-side processing: after clients send activations (2) and receive gradients (3), SFL-V1 aggregates both client and server-side models, whereas SFL-V2 only aggregates client-side models.}
    \label{fig:fl-differences}
\end{figure*}

Despite recent advances in SFL algorithms, a critical yet understudied aspect is the selection of the cut layer—the point at which the neural network is split between client and server. Prior work on cut layer selection in SFL has primarily focused on communication efficiency and privacy preservation \cite{wu2023split, zhang2023privacy}. However, the impact of cut layer selection on model convergence and  performance remains under-explored. This is a critical gap because the cut layer determines the computational load distribution and can affect model performance. Understanding these dynamics is essential for deploying SFL effectively in real-world applications.

In this paper, we aim to address two fundamental questions regarding SFL:

\begin{itemize}
    \item \textbf{Question 1:} How does the cut layer affect the performance of SFL?
    \item \textbf{Question 2:} How does the performance of SFL compare to FL?
\end{itemize}

These questions are motivated by several critical factors. First, deep learning is resource-intensive to implement on edge devices \cite{9220170}, and the choice of cut layer can significantly affect computation requirements. Recent work by Lee et al. \cite{lee2024exploring} demonstrates that deeper cut layers increase both client-side computation and communication energy costs - client computation increases with more local layers, while communication costs rise from transmitting larger model parameters during synchronization. They also show that deeper cut layers enhance privacy by making input reconstruction from smashed data more difficult, highlighting the trade-off between efficiency and privacy in cut layer selection. Second, while SFL shows promise as a distributed learning approach, its performance characteristics, particularly in relation to cut layer selection and data heterogeneity, remain underexplored.

Regarding the impact of cut layer selection, we prove theoretically and validate experimentally that SFL-V1's performance remains invariant to cut layer placement in both Independent and Identically Distributed (IID) and non-IID settings. This behavior stems from SFL-V1's architecture, where separate server-side models for each client preserve the independence of client training paths regardless of cut layer location. In contrast, SFL-V2 exhibits significant performance variations respect to with cut layer placement, which we postulate because its shared server-side model enables effective learning from all client activations simultaneously.


Our key contributions are summarized below:

\begin{itemize}
    \item We provide a quantitative study on the effect of cut layer selection on SFL performance.
    
    \item We provide a convergence analysis of SFL with respect to cut layer selection. We show that SFL-V1 is invariant to cut layer selection.
    
    \item We conduct numerical experiments across four datasets and two model architectures. We find that SFL-V2 with an early cut layer significantly outperforms FedAvg on heterogeneous data. SFL-V1, while maintaining consistent performance across cut layers, achieves lower accuracy compared to SFL-V2 and similar to FL. 
\end{itemize}

\section{Related Work}

\subsection{Federated Learning}

Federated Learning (FL) was introduced by McMahan et al. \cite{mcmahan2017communication} as a distributed learning paradigm where multiple clients collaborate to train a model while keeping their data local. In FL, each client trains a complete copy of the model on their local data, and a model synchronization server aggregates these models to create a global model. The most widely-used FL algorithm, FedAvg, performs weighted averaging of client models based on their local dataset sizes.

Data heterogeneity represents one of the fundamental challenges in FL.
When clients have heterogeneous data distributions, FedAvg can suffer significant performance degradation due to the notorious client drift issue
\cite{hsu2019measuring}. Several approaches have been proposed to address this challenge: SCAFFOLD \cite{karimireddy2020scaffold} uses control variates to correct for client drift in non-IID settings, while FedPer \cite{arivazhagan2019federated} maintains personalized layers on clients while sharing base layers globally. Other approaches like FedDF \cite{lin2020ensemble} employ ensemble distillation for better heterogeneous model aggregation, and FedProx \cite{li2020federated} adds a proximal term to limit local updates from diverging from the global model. Recent works have introduced novel approaches to tackle heterogeneity: FedUV \cite{son2024feduv} introduces regularization terms to emulate IID settings locally by controlling classifier variance and representation uniformity. MOON \cite{li2021model} leverages model-level contrastive learning to correct local training. FedAlign \cite{10371403} addresses data heterogeneity through data-free knowledge distillation, using a generator to estimate global feature distributions and align local models accordingly.\footnote{While these techniques were primarily designed for FL, they can be easily implemented in SFL.}


\subsection{Split Learning}
Split Learning (SL), introduced by Gupta and Raskar \cite{gupta2018distributed}, takes a different approach to distributed learning. Instead of training complete models on clients, SL splits a neural network at a cut layer, with initial layers on clients and remaining layers on the server. In SL, clients process data through their local layers up to the cut layer, then send activations to the server. The server completes the forward pass, computes gradients, and sends these gradients back to clients for local model updates.

Several works have explored SL's potential for handling heterogeneous data distributions. SplitAVG \cite{zhang2022splitavg} combines network splitting with feature map concatenation to train an unbiased estimator of the target data distribution. Li and Lyu \cite{li2024convergence} provided theoretical convergence guarantees for split learning, suggesting potential advantages over FedAvg on heterogeneous data.
COMSPLIT \cite{ninkovic2024comsplit} introduced a communication-aware SL framework for time series data, incorporating early-exit strategies to handle devices with heterogeneous computational capabilities. FedCST \cite{wang2024fedcst} proposed a hybrid approach combining device clustering with SL to address both data and device heterogeneity. AdaSplit \cite{chopra2021adasplit} focused on improving SL's performance across heterogeneous clients while reducing bandwidth consumption through adaptive mechanisms. In the domain of graph neural networks, SplitGNN \cite{xu2023splitgnn} addressed heterogeneity in graph data through a split architecture with heterogeneous attention mechanisms. RoofSplit \cite{huang2023roofsplit} proposed an edge computing framework that optimally splits CNN models across heterogeneous edge nodes based on Roofline theory \cite{williams2009roofline}.

\subsection{Split Federated Learning}
Split Federated Learning (SFL) was originally proposed by Thapa et al. \cite{thapa2022splitfed} as a hybrid approach combining FL and SL. Since its introduction, several works have explored various aspects of SFL, from privacy and security to system optimization and scalability.

Privacy preservation has been a key focus in SFL research. Li et al. \cite{li2022ressfl} proposed ResSFL, a framework designed to be resistant to model inversion attacks during training through attacker-aware training of the feature extractor. Khan et al. \cite{khan2022security} conducted the first empirical analysis of SFL's robustness against model poisoning attacks, demonstrating that SFL's lower-dimensional model updates provide inherent robustness compared to traditional FL. Zhang et al. \cite{zhang2023privacy} demonstrated that deeper cut layers in SFL improve privacy by reducing reconstruction accuracy at the cost of increased client computation and communication overhead. Zheng et al. \cite{zheng2024ppsfl} introduced PPSFL, incorporating private group normalization layers to protect privacy while addressing data heterogeneity.

System optimization and efficiency have been another major research direction. Mu et al. \cite{mu2023communication} developed CSE-FSL to reduce communication overhead and server storage requirements through an auxiliary network and selective epoch communication. Gao et al. \cite{gao2024pipesfl} proposed PipeSFL, a fine-grained parallelization framework that addresses client heterogeneity through priority scheduling and hybrid training modes. Xu et al. \cite{xu2023accelerating} tackled the challenge of heterogeneous devices by jointly optimizing cut layer selection and bandwidth allocation.

Several works have focused on enhancing SFL's performance and applicability. Yang et al. \cite{yang2022robust} introduced RoS-FL specifically for U-shaped medical image networks, incorporating a dynamic weight correction strategy to stabilize training and prevent model drift. Li et al. \cite{li2023split} demonstrated SFL's effectiveness in healthcare settings, achieving comparable performance to centralized and federated approaches while improving privacy and computational efficiency. Shin et al. \cite{shin2023fedsplitx} proposed FedSplitX to handle system heterogeneity in foundation models by implementing multiple partition points and auxiliary networks.
Liao et al. \cite{liao2024mergesfl} developed MergeSFL, which combines feature merging and batch size regulation to improve accuracy and training efficiency. Zhu et al. \cite{zhu2024esfl} proposed ESFL, which jointly optimizes user-side workload and server-side resource allocation for heterogeneous environments.

Recent work has begun to examine the implications of cut layer selection, though primarily from system-level perspectives. Lee et al. \cite{lee2024exploring} provided a comprehensive analysis of how cut layer placement affects both energy consumption and privacy preservation in SFL systems, demonstrating that cut layer selection creates important trade-offs between computational efficiency and data protection. Shiranthika et al. \cite{shiranthika2023splitfed} investigated the robustness of different cut layer positions against network packet loss, finding statistical evidence that deeper split points can provide advantages in certain applications. While these works provide valuable insights into system-level tradeoffs, \textbf{they do not address the question of how cut layer selection affects model performance}. Note that recent work has examined SFL convergence properties \cite{han2024convergence}, but it does not analyze the specific effects of cut layer selection. Our work addresses this gap by providing the first comprehensive analysis of how cut layer selection affects model performance across various datasets, architectures, and data distributions.


\section{Problem Formulation}
We provide an overview of SFL here. The complete algorithmic description with detailed notation can be found in Appendix~\ref{appendix:algorithms}. Consider a set of clients $\mathcal{K} = \{1, 2, \ldots, K\}$, where each client $k \in \mathcal{K}$ possesses a local private dataset $\mathcal{D}_k$ of size $D_k=|\mathcal{D}_k|$. The global model, parameterized by $\theta$, consists of $L$ layers. In SFL, the global model is split at the $L_c$-th layer (the cut layer) into two segments: 
\begin{itemize}
    \item Client-side model $\theta^C$ (layers 1 to $L_c$),
    \item Server-side model $\theta^S$ (layers $L_c + 1$ to $L$),
\end{itemize}
where $\theta = \{\theta^C; \theta^S\}$. Let $\theta^C_k$ denote client $k$'s local client-side model. The system involves two servers (see Fig. \ref{fig:fl-differences}): 
\begin{itemize}
    \item Model synchronization server (MSS): It aggregates and distributes model weights, similar to FL.
    \item Training server (TS): It maintains server-side models and performs training computations.
\end{itemize}
In this paper, we focus on two variants of SFL as proposed in \cite{thapa2022splitfed}: SFL-V1 and SFL-V2. The key distinction between these variants lies in how the TS manages its models. In SFL-V1, the TS maintains separate server-side models $\theta^S_k$ for each client $k$, which the MSS aggregates after each communication round to form a global server-side model. In SFL-V2, the TS maintains a single shared model $\theta^S$ that processes data from all clients sequentially. 

More specifically,  let $F_k (\theta, B_k)$ denote the loss of model $\theta$ over client $k$'s mini-batch samples $B_k$, which is randomly sampled from client $k$'s dataset $\mathcal{D}_k$. 
Let $ F_k (\theta)\triangleq \mathbb{E}_{B_k\sim \mathcal{D}_k}[ F_k (\theta, B_k)]$ denote the expected loss of model $\theta$ over client $k$'s dataset. SFL aims to minimize the expected loss of the model $\theta$ over the datasets of all clients:
\begin{equation}\label{global-loss}
\min_{\theta} f(\theta)=\sum_{k=1}^K \alpha_k F_k\left(\theta\right), 
\end{equation}
where $\alpha_k\in [0,1]$ is the weight of client $k$, and we typically have $\alpha_k=D_k/\sum_{k'\in \mathcal{K}} D_{k'}$.

An SFL process operates for $T$ communication rounds to minimize the global loss function $f(\theta)$. 
Each communication round consists of the following steps:

\paragraph{Client Forward Pass}
At the beginning of each round $t$, every client (in parallel) $k$ samples a mini-batch $B_k$ from its local dataset $\mathcal{D}_k$. The client performs a forward pass through its local model $\theta^C_k(t)$ using $B_k$, producing activations $a_k(t)$ at the cut layer. These activations and their corresponding ground truth labels $y_k(t)$ from $B_k$ are sent to the TS.

\paragraph{Training Server-side Computation}
The TS operation differs between SFL-V1 and SFL-V2. For SFL-V1, each client's activations $a_k(t)$ are processed by their corresponding server-side model $\theta^S_k(t)$. The TS computes predictions $\hat{y}_k(t)$ and the loss $\mathcal{L}(\hat{y}_k(t), y_k(t))$ for each client. Then, the TS computes and applies gradients $\nabla \theta^S_k(t)$ to update each server-side model. It computes gradients $\nabla a_k(t)$ at the cut layer, and sends $\nabla a_k(t)$ back to each client. 

For SFL-V2, client activations $a_k(t)$ are processed sequentially in a randomized manner through the shared server-side model $\theta^S(t)$. For each client's data, the TS computes predictions $\hat{y}_k(t)$ and loss $\mathcal{L}(\hat{y}_k(t), y_k(t))$, and updates $\theta^S(t)$ using gradient descent. Then, it computes gradients $\nabla a_k(t)$ at the cut layer, and sends $\nabla a_k(t)$ back to each client.



\paragraph{Client Backward Pass}
After receiving $\nabla a_k(t)$, each client completes its backward pass to compute gradients $\nabla \theta^C_k(t)$. The client updates its model: $\theta^C_k(t+1) = \theta^C_k(t) - \eta^t \nabla \theta^C_k(t)$, where $\eta^t$ is the learning rate. This process repeats for $E$ local epochs before the next communication round.

\paragraph{Model Aggregation}
At the end of each round, the MSS performs FedAvg \cite{mcmahan2017communication} on the client-side models, where models are averaged weighted by their local dataset sizes to form $\theta^C(t+1)$. In SFL-V1, the MSS similarly aggregates its per-client server-side models $\theta^S_k(t)$ into a single model $\theta^S(t+1)$. In SFL-V2, the server-side model $\theta^S(t)$ does not perform aggregation.

\section{Theoretical Results}
In this section, we provide convergence analysis to understand the impact of cut layer selection. We start with some assumptions that are widely adopted in the distributed learning literature, e.g.,  \cite{han2024convergence,woodworth2020minibatch,wang2019adaptive}.
\begin{Assump} \label{assump: non-convexity}
Each client $k$'s loss function $F_k$ is non-convex. 
\end{Assump}
\begin{Assump} \label{assump: smoothness}
Each client $k$'s loss function $F_k$ is $S$-smooth. That is,  for all model parameters $\theta, \phi$,
\begin{equation}
F_k(\theta) \leq F_k(\phi)+\langle\nabla F_k(\phi), \theta-\phi\rangle+\frac{S}{2}\|\theta-\phi\|^2.
\end{equation}
\end{Assump}

\begin{Assump}\label{assump: stochastic gradient}
The stochastic gradients $g_k(\cdot)$ of $F_k\left(\cdot\right)$ 
are unbiased with the variance bounded by $\sigma_k^2$: 
\begin{equation}
    \mathbb{E}_{B_k\sim \mathcal{D}_n}\left[g_k\left(\theta, B_k\right)\right] = \nabla F_k\left(\theta\right),
\end{equation}
\begin{equation}\label{bounded-variance}
    \mathbb{E}_{B_k\sim \mathcal{D}_n}\left[\left\Vert g_k\left(\theta,B_k\right) - \nabla F_k\left(\theta\right) \right\Vert^2 \right] \leq \sigma_k^2.
\end{equation}
\end{Assump}

\begin{Assump}\label{assump: heterogeneity}
There exists an $\epsilon^2$ such that for any client $k$, 
\begin{equation}
\left\Vert \nabla F_k\left(\theta\right)-\nabla f\left(\theta\right)  \right\Vert^2 \le \epsilon^2.
\end{equation}
\end{Assump}
A larger $\epsilon^2$ indicates greater data heterogeneity. 
Next, we show that the convergence of SFL-V1 is invariant to the choice of cut layer $L_c$. 


\begin{Prop}(Convergence Invariability to Cut Layer Selection in SFL-V1)\label{SFL-V1-invariability}
Let Assumptions \ref{assump: non-convexity}-\ref{assump: heterogeneity} hold, and let $\eta^t \leq \min\left\{ \frac{1}{16S\tau}, \frac{1}{8SK\tau\sum_{k=1}^K\alpha_k^2}\right\}$. Then, for any $L_c\in \{1, 2, \cdots, L-1\}$, the following inequality holds:
\begin{equation}\label{bound-v1-nc-full}
\begin{aligned}
&\frac{1}{T}\sum_{t=0}^{T-1}\eta^t\mathbb{E}\left[\left\Vert\nabla f\left(\theta(t)\right)\right\Vert^2\right] \leq \frac{4}{T\tau}\left(f\left(\theta(0)\right)-f(\theta^{\ast}))\right) \\
&\hspace{10mm}+ \frac{16KS\tau}{T}\sum_{k=1}^K\alpha_k^2\left(\sigma_k^2+\epsilon^2\right)\sum_{t=0}^{T-1}\left(\eta^t\right)^2, 
\end{aligned}
\end{equation}
where $\theta^{\ast}$ is the optimal global model, $E$ is the number of epochs, and $\tau\triangleq\lceil \frac{E D_k}{B_k}\rceil$ denotes the number of model updates in one communication round. 
\end{Prop}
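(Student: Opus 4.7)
The plan is to prove Proposition~1 by establishing that SFL-V1 is mathematically equivalent to FedAvg applied to the full model $\theta = \{\theta^C; \theta^S\}$, regardless of where the cut layer is placed. Once this equivalence is established, the bound follows from a standard FedAvg-style convergence argument under Assumptions~\ref{assump: non-convexity}--\ref{assump: heterogeneity}, and the independence of the bound from $L_c$ is immediate because no quantity in it depends on how $\theta$ is partitioned.

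First I would unpack one communication round of SFL-V1 for a fixed client $k$. In the forward pass, the client computes $a_k(t)$ from $\theta^C_k(t)$ and the server computes $\hat{y}_k(t)$ from $\theta^S_k(t)$; composing these yields exactly the output the full network $\theta_k(t)=\{\theta^C_k(t);\theta^S_k(t)\}$ would produce on $B_k$. In the backward pass, the server first computes $\nabla_{\theta^S_k} \mathcal{L}$ and the cut-layer gradient $\nabla a_k(t)$; the client then computes $\nabla_{\theta^C_k} \mathcal{L}$ via the chain rule starting from $\nabla a_k(t)$. Concatenating these two gradient vectors yields precisely the stochastic gradient $g_k(\theta_k(t), B_k)$ of $F_k$ with respect to the complete model, so the one-step update on $(\theta^C_k,\theta^S_k)$ is identical to a single SGD step on $\theta_k$ with the same learning rate $\eta^t$. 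Since this holds for every local step and since the MSS aggregates both $\theta^C_k$ and $\theta^S_k$ by the same weights $\alpha_k$ at the end of the round, the composite iterate $\theta(t)=\{\theta^C(t);\theta^S(t)\}$ evolves exactly as in FedAvg run on $\theta$ with $\tau=\lceil E D_k/B_k\rceil$ local updates per round. Crucially, this equivalence is established cut-layer-by-cut-layer and nothing in the derivation depends on the value of $L_c$.

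With the equivalence in hand, the second stage is to invoke a standard non-convex FedAvg convergence analysis. I would introduce the virtual averaged iterate $\theta(t)=\sum_k \alpha_k \theta_k(t)$ and use Assumption~\ref{assump: smoothness} to expand $f(\theta(t+1))$ around $\theta(t)$. The cross term yields the desired $\eta^t\|\nabla f(\theta(t))\|^2$ contribution once we control two error sources: the stochastic-gradient noise (bounded by $\sigma_k^2$ via Assumption~\ref{assump: stochastic gradient}), and the client-drift term $\mathbb{E}\|\theta_k(t)-\theta(t)\|^2$ accumulated over the $\tau$ local steps (bounded using the heterogeneity parameter $\epsilon^2$ from Assumption~\ref{assump: heterogeneity} together with smoothness). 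The prescribed step-size condition $\eta^t \le \min\{1/(16S\tau), 1/(8SK\tau\sum_k\alpha_k^2)\}$ is exactly what is needed to absorb the quadratic $(\eta^t)^2 S$ terms from both the descent lemma and the drift bound. Telescoping across $t=0,\ldots,T-1$ and using $f(\theta(T))\ge f(\theta^\ast)$ then produces the stated inequality.

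The main obstacle will be formalizing the equivalence argument cleanly rather than the convergence analysis itself, which is a close variant of the FedAvg non-convex results already in the literature. In particular, I need to argue that applying gradient descent to $\theta^C_k$ and $\theta^S_k$ separately, with the server's gradient of $\theta^S_k$ computed against $\theta^S_k(t)$ while the client's gradient of $\theta^C_k$ is computed against $\theta^C_k(t)$, truly equals a single SGD step on the concatenated $\theta_k(t)$; this is a direct consequence of the fact that $\nabla_{\theta}\mathcal{L}$ decomposes blockwise over $(\theta^C,\theta^S)$ and both blocks are evaluated at time $t$ within the same forward/backward sweep. Once this is spelled out, the cut-layer-invariance claim is essentially tautological: changing $L_c$ only reshuffles which coordinates of $\theta$ are labeled ``client-side'' versus ``server-side'', and the resulting iterate sequence $\{\theta(t)\}$, and hence the right-hand side of~\eqref{bound-v1-nc-full}, is unchanged.
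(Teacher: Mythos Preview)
Your proposal is correct and rests on the same key insight as the paper: SFL-V1 is iterate-for-iterate equivalent to FedAvg on the concatenated model $\theta=\{\theta^C;\theta^S\}$, so the convergence bound is the standard non-convex FedAvg bound and is automatically independent of $L_c$. The only organizational difference is that the paper carries out the descent-lemma analysis separately for the server-side block $\theta_s$ and the client-side block $\theta_c$ (establishing two preliminary lemmas on local drift and gradient accumulation, applying them to each block, and then superposing via $\|\nabla_\theta f\|^2=\|\nabla_{\theta_c}f\|^2+\|\nabla_{\theta_s}f\|^2$), whereas you propose to work directly on the full $\theta$ after establishing the equivalence; your route avoids the symmetric duplication, while the paper's makes explicit that neither block contributes an $L_c$-dependent term.
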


The proof of Proposition \ref{SFL-V1-invariability} is given in Appendix \ref{appendix: proof}. The convergence bound is affected by terms involving initial conditions $(f(\theta(0))) - f(\theta^{\ast}))$, data heterogeneity $\epsilon^2$, and gradient variance $\sigma_k^2$. A worse initial condition, a larger data heterogeneity, or a larger gradient variance will hurt the convergence, which is also observed in other distributed learning algorithms \cite{han2024incentivizing, huang2024federated}. Importantly, Proposition \ref{SFL-V1-invariability} holds for any cut layer selection. The key rationale is that SFL-V1 with different cut layers can be equivalently transformed into FedAvg with identical model updates.  This mathematical invariance to $L_c$ aligns with our empirical observations in Sec. \ref{sec:numerical} that SFL-V1's performance remains stable across different cut layer selections.
 

Note that the convergence proof of SFL-V2 would require a significantly different approach, and is left to future work. We will show in Sec. \ref{sec:numerical} that SFL-V2 outperforms SFL-V1 and FL under various experiment settings.

\section{Numerical Results}\label{sec:numerical}

We conduct experiments to validate our theoretical analysis and provide additional insights.

\begin{table}[t]
\caption{Dataset configurations showing architectures, client counts, and communication rounds.}
\label{tab:datasets_models_clients}
\centering
\begin{tabular}{l ccc}
\toprule
Dataset & Model & Clients & Rounds \\
\midrule
HAM10000 & ResNet-18 & 100 & 100 \\
CIFAR-10 & ResNet-18 & 100 & 200 \\
CIFAR-100 & ResNet-50 & 100 & 300 \\
Tiny ImageNet & ResNet-50 & 10 & 300 \\
\bottomrule
\end{tabular}
\end{table}

\begin{table}[t]
\caption{Training hyperparameters for experiments.}
\label{tab:hyperparameters}
\centering
\begin{tabular}{l ccc}
\toprule
Algorithm & Learning rate & Batch size & Optimizer \\
\midrule
SFL-V1 & 0.001 & 64 & Adam \\
SFL-V2 & 0.001 & 64 & Adam \\
FedAvg & 0.01 & 64 & SGD \\
\bottomrule
\end{tabular}
\end{table}
\subsection{Experimental Setup}

We conduct experiments on CIFAR-10 \cite{krizhevsky2009learning} and HAM10000 \cite{DVN/DBW86T_2018} with ResNet-18, and CIFAR-100 \cite{krizhevsky2009learning} and Tiny ImageNet \cite{Le_Yang_2015} with ResNet-50. 
The architectures use cut layer $L_c \in \{1, 2, 3, 4\}$, where $L_c = i$ denotes the network is cut following the $i$-th residual block. These cut points correspond to the boundaries between macro-residual blocks in ResNet architectures \cite{he2016deep}. We set the number of clients $K=100$, except for Tiny ImageNet, where $K=10$. The models, numbers of clients, and rounds of training used for each dataset are enumerated in Table \ref{tab:datasets_models_clients}. The local training epochs for each client were set as $E=5$. Hyperparameters for all algorithms are enumerated in Table \ref{tab:hyperparameters}. 

We consider both IID and non-IID scenarios. In our non-IID scenario, we use a label-based Dirichlet distribution \cite{hsu2019measuring, li2021model} with $\mu = 0.1$ to simulate highly imbalanced datasets across clients. Here, $\mu>0$ is a tunable parameter, where a smaller value of $\mu$ refers to a more imbalanced partition and $\mu=\infty$ corresponds to IID data partition. We exclude the HAM10000 dataset from this process because the dataset itself is highly label skewed. The results for IID and non-IID settings are presented in Tables \ref{tab:iid_results} and \ref{tab:noniid_results}, respectively. Bold values indicate the best performance for each dataset, while underlined values show the second-best results.

\subsubsection{Baseline Selection}
We use FedAvg as our FL baseline since innovations like proximal terms, control variates, and knowledge distillation could enhance both FL and SFL implementations. For instance, proximal terms could be added to client objectives in SFL to limit divergence, or control variates could be incorporated to correct for client drift. Our choice enables a fair comparison of the architectural differences between approaches without conflating algorithm-specific improvements.

\begin{table*}[ht]
\centering
\caption{Test accuracy (\%) comparison on IID data. Bold and underlined indicate best and second-best results respectively (mean ± std over 3 independent runs).}
\label{tab:iid_results}
\begin{tabular}{l ccc}
\toprule
& \multicolumn{1}{c}{ResNet-18} & \multicolumn{2}{c}{ResNet-50} \\
\cmidrule(lr){2-2} \cmidrule(lr){3-4}
Method & CIFAR10 & CIFAR100 & TinyImageNet \\
\midrule
FedAvg & 85.25 ± 0.07 & 48.48 ± 0.38 & 47.49 ± 0.07 \\
\addlinespace
SFL-V1 ($L_c=1$) & 86.34 ± 0.37 & 49.56 ± 0.18 & 48.73 ± 0.43 \\
SFL-V1 ($L_c=2$) & 86.12 ± 0.46 & \underline{50.28 ± 0.76} & 49.26 ± 0.78 \\
SFL-V1 ($L_c=3$) & 85.94 ± 0.36 & 49.29 ± 0.11 & 49.11 ± 0.70 \\
SFL-V1 ($L_c=4$) & 86.01 ± 0.45 & 46.59 ± 1.42 & 48.11 ± 0.27 \\
\addlinespace
SFL-V2 ($L_c=1$) & \textbf{92.30 ± 0.15} & \textbf{56.00 ± 0.69} & \textbf{54.77 ± 0.22} \\
SFL-V2 ($L_c=2$) & \underline{89.56 ± 0.15} & 45.93 ± 0.21 & \underline{50.86 ± 0.05} \\
SFL-V2 ($L_c=3$) & 87.57 ± 0.34 & 41.29 ± 0.26 & 44.15 ± 0.11 \\
SFL-V2 ($L_c=4$) & 86.11 ± 0.36 & 45.98 ± 0.24 & 43.85 ± 0.25 \\
\bottomrule
\end{tabular}
\end{table*}

\begin{table*}[ht]
\centering
\caption{Test accuracy (\%) comparison on non-IID data with Dirichlet $\mu=0.1$. Bold and underlined indicate best and second-best results respectively (mean ± std over independent 3 runs).}
\label{tab:noniid_results}
\begin{tabular}{l cccc}
\toprule
& \multicolumn{2}{c}{ResNet-18} & \multicolumn{2}{c}{ResNet-50} \\
\cmidrule(lr){2-3} \cmidrule(lr){4-5}
Method & HAM10000 & CIFAR10 & CIFAR100 & TinyImageNet \\
\midrule
FedAvg & 77.37 ± 0.35 & \underline{67.59 ± 2.52} & 42.60 ± 1.18 & \underline{28.33 ± 0.28} \\
\addlinespace
SFL-V1 ($L_c=1$) & 77.88 ± 0.13 & 66.38 ± 2.22 & 39.96 ± 0.84 & 12.84 ± 0.73 \\
SFL-V1 ($L_c=2$) & 78.18 ± 0.67 & 65.64 ± 1.49 & 39.82 ± 0.63 & 13.42 ± 0.60 \\
SFL-V1 ($L_c=3$) & 78.40 ± 0.66 & 65.36 ± 1.09 & 40.32 ± 0.88 & 13.66 ± 0.97 \\
SFL-V1 ($L_c=4$) & 78.10 ± 0.83 & 64.51 ± 1.84 & 41.24 ± 1.18 & 13.89 ± 1.00 \\
\addlinespace
SFL-V2 ($L_c=1$) & \textbf{80.58 ± 0.35} & 67.58 ± 6.28 & \textbf{52.38 ± 0.97} & \textbf{30.14 ± 8.58} \\
SFL-V2 ($L_c=2$) & \underline{79.56 ± 0.88} & 59.98 ± 11.99 & \underline{45.90 ± 1.08} & 22.55 ± 2.28 \\
SFL-V2 ($L_c=3$) & 79.26 ± 0.54 & 61.73 ± 4.04 & 42.42 ± 0.99 & 27.08 ± 1.48 \\
SFL-V2 ($L_c=4$) & 78.27 ± 0.65 & \textbf{69.45 ± 0.64} & 43.31 ± 1.17 & 26.42 ± 1.52 \\
\bottomrule
\end{tabular}
\end{table*}
\subsection{Impact of Cut Layer on SFL}

\subsubsection{SFL-V1: Robust Across Cut Layers}
SFL-V1 demonstrates stability across cut layers in both IID and non-IID settings, which is consistent with our analysis. For IID CIFAR10, the performance variation is minimal ($0.4\%$), while for non-IID CIFAR10, it's slightly higher but still modest ($1.87\%$). This stability stems from SFL-V1's algorithmic similarity to FedAvg, with minor variations potentially attributable to implementation-specific factors in PyTorch's computation graph. The robustness of SFL-V1 to cut layer selection enhances its deployability, particularly for resource-constrained edge devices and heterogeneous client environments.

\subsubsection{SFL-V2: Performance Affected by Cut Layer Selection} In contrast, SFL-V2 exhibits a strong dependency on cut layer selection. In three out of four datasets, SFL-V2 achieves peak performance with ($L_c=1$). The impact of cut layer becomes more pronounced in non-IID settings. For instance, on non-IID CIFAR100, SFL-V2's performance ranges from $42.42\%$ ($L_c=3$) to $52.38\%$ ($L_c=1$). While SFL-V2 generally achieves optimal performance with ($Lc=1$), we observe an interesting exception with non-IID CIFAR-10, where ($Lc=4$) performs best ($69.45\%$ vs $67.58\%$ at ($Lc=1$)). This outlier may be attributed to CIFAR-10's relatively simple feature space compared to CIFAR-100 or TinyImageNet. With only 10 classes, later layer features might be more discriminative for class separation under non-IID conditions. However, this pattern doesn't generalize across other datasets, suggesting that early cut layers remain the more robust choice for most applications.

The aforementioned behavior can be attributed to the fact that lower cut layers allow more data to pass through the training server, which more closely approximates centralized learning (CL), where all data is pooled and trained at a single location. Conceptually, $L_c=0$ would be equivalent to CL, while $L_c=L$ would reduce to FedAvg. Thus, the cut layer in SFL-V2 can serve as a crucial performance tuning parameter.

\subsection{SFL vs FL}
Our experiments demonstrate that SFL, particularly SFL-V2, often outperforms FL. In IID settings, SFL-V2 ($L_c=1$) consistently outperforms FedAvg across all datasets. For CIFAR10, SFL-V2 achieves $92.30\%$ accuracy compared to FedAvg's $85.25\%$, a significant $7.05\%$ improvement. The performance gap widens in non-IID scenarios. On CIFAR100, SFL-V2 ($L_c=1$) reaches $52.38\%$, while FedAvg achieves $42.60\%$, a substantial $9.78\%$ difference. This advantage is maintained with increasing dataset complexity: For non-IID TinyImageNet, SFL-V2 ($L_c=1$) achieves $30.14\%$ versus FedAvg's $28.33\%$.

Our results suggest that SFL-V2's superior performance stems from its unique architecture combining local and centralized computation. With an early cut layer, most of the model computation occurs at the training server, where the server-side model processes features from all clients without requiring weight aggregation. Unlike FedAvg, where models must reconcile entirely separate local training trajectories through weight averaging, SFL-V2's server-side model directly updates on client features through back propagation. This more direct form of learning appears to be more effective than weight aggregation, particularly for heterogeneous data distributions. The sequential processing of client features by a single shared server-side model, combined with client-side model aggregation, enables SFL-V2 to learn global patterns more efficiently than traditional FL approaches. This helps explain why SFL-V2 can surpass FedAvg, particularly when the cut layer is placed early in the network.

\section{Discussion}

\subsection{Addressing Non-IID Challenges}
While our experiments demonstrate SFL-V2's strong performance in non-IID settings, several important theoretical questions remain unexplored. First, we lack a theoretical framework explaining why SFL-V2's architecture provides advantages at certain cut layer selections. Understanding this could help optimize cut layer placement and guide architectural choices for different types of data heterogeneity. Additionally, while we've shown empirical improvements, we haven't established theoretical guarantees for the advantage of SFL-V2 over FedAvg.  A comprehensive theoretical analysis would need to account for the complex interplay between data heterogeneity, model architecture, and system constraints like client computation and communication resources.

\subsection{Privacy Considerations}
Although SFL maintains some level of privacy by keeping part of the model on client devices, the improved performance with lower cut layers in SFL-V2 raises important privacy considerations. As more layers are shifted to the server, there's an increased risk of potential privacy leakage through model inversions or inference attacks. Further research is needed to quantify these risks and develop mitigation strategies, possibly through the integration of differential privacy techniques or secure multi-party computation.

\section{Conclusion}
This paper presents an analysis of split federated learning, focusing on the impact of cut layer selection on model performance across various datasets and data distributions. Our study reveals that SFL-V1 exhibits stability across cut layers, while SFL-V2's performance is significantly influenced by the cut layer depth, with lower layers generally yielding better results. Notably, SFL-V2 outperforms FedAvg in both IID and non-IID settings. These findings highlight SFL's potential as a flexible and effective approach for privacy-preserving distributed learning, particularly in scenarios with heterogeneous data distributions.

For future work, it would be interesting to explore how to combine SFL with FL techniques handling data heterogeneity such as SCAFFOLD's control variates, FedProx's proximal terms, and MOON's contrastive learning approaches, using the unique availability of smashed data in SFL. Several challenges remain: developing a theoretical framework explaining SFL-V2's improved performance under non-IID conditions, quantifying privacy-performance trade-offs in cut layer selection, and validating these findings on real-world datasets. Additionally, developing adaptive algorithms that optimize these trade-offs based on specific application requirements, privacy constraints, and system limitations represents an important direction for future research.

\bibliography{main}

\onecolumn
\begin{appendix}

\section{Appendix: Algorithmic Framework}\label{appendix:algorithms}

\subsection{Notation}
We denote the number of clients as $K$ and the set of all clients as $\mathcal{K} = \{1,\ldots,K\}$. Each client $k$ has a local dataset $\mathcal{D}_k$ of size $|\mathcal{D}_k|$. The client-side model parameters for client $k$ are denoted as $\theta^C_k$, while the server-side model parameters are denoted as $\theta^S$ for SFL-V2 and $\theta^S_k$ for each client $k$ in SFL-V1. The cut layer index is denoted as $L_c$, and $B_k$ represents a mini-batch sampled from client $k$'s local dataset. We use $\eta$ as the learning rate, $E$ as the number of local epochs, and $T$ as the number of communication rounds. The activations at the cut layer from client $k$ are denoted as $a_k$, with $y_k$ representing the ground truth labels and $\hat{y}_k$ representing the predictions for client $k$'s data. The system involves two distinct server roles: the Training Server (TS), which maintains and updates server-side models, and the Model Synchronization Server (MSS), which handles model aggregation and distribution.

\begin{center}
\begin{minipage}{1.0\textwidth}
\subsection{SFL-V1 Algorithm}
\begin{algorithm}[H]
\caption{Split Federated Learning V1 (SFL-V1)}
\begin{algorithmic}[1]
\Require
    \State Clients $\mathcal{K} = \{1,\ldots,K\}$
    \State Communication rounds $T$
    \State Local epochs $E$
    \State Learning rate $\eta$
    \State Cut layer index $L_c$
    \State Local datasets $\{\mathcal{D}_k\}_{k=1}^K$
\Ensure
    \State Final client models $\{\theta^C_k(T)\}_{k=1}^K$
    \State Final server models $\{\theta^S_k(T)\}_{k=1}^K$

\State Initialize client models $\{\theta^C_k(0)\}_{k=1}^K$ and server models $\{\theta^S_k(0)\}_{k=1}^K$
\For{round $t = 0,\ldots,T-1$}
    \Statex \textbf{Client-side Parallel Processing:}
    \For{each client $k \in \mathcal{K}$ in parallel}
        \State Sample mini-batch $B_k \sim \mathcal{D}_k$
        \State Compute activations $a_k(t)$ through $\theta^C_k(t)$ up to $L_c$
        \State Send $(a_k(t), y_k(t))$ to Training Server
    \EndFor
    
    \Statex \textbf{Training Server Operations:}
    \For{each client $k \in \mathcal{K}$ in parallel}
        \State Compute predictions: $\hat{y}_k(t) = \theta^S_k(t)(a_k(t))$
        \State Calculate loss: $\mathcal{L}(\hat{y}_k(t), y_k(t))$
        \State Update server model: $\theta^S_k(t) \gets \theta^S_k(t) - \eta\nabla\theta^S_k(t)$
        \State Compute and send gradients $\nabla a_k(t)$ to client $k$
    \EndFor
    
    \Statex \textbf{Client Backward Pass:}
    \For{each client $k \in \mathcal{K}$ in parallel}
        \For{epoch $e = 1,\ldots,E$}
            \State Backpropagate using $\nabla a_k(t)$
            \State Update client model: $\theta^C_k(t) \gets \theta^C_k(t) - \eta\nabla\theta^C_k(t)$
        \EndFor
    \EndFor
    
    \Statex \textbf{Model Synchronization:}
    \State Aggregate client models: $\theta^C(t+1) \gets \sum_{k=1}^K \frac{|\mathcal{D}_k|}{\sum_i |\mathcal{D}_i|} \theta^C_k(t)$
    \State Aggregate server models: $\theta^S(t+1) \gets \sum_{k=1}^K \frac{|\mathcal{D}_k|}{\sum_i |\mathcal{D}_i|} \theta^S_k(t)$
    \State Broadcast aggregated models to all clients for next round:
    \For{each client $k \in \mathcal{K}$}
        \State Set client model: $\theta^C_k(t+1) \gets \theta^C(t+1)$
        \State Set server model: $\theta^S_k(t+1) \gets \theta^S(t+1)$
    \EndFor
\EndFor
\end{algorithmic}
\end{algorithm}
\end{minipage}
\end{center}

\begin{center}
\begin{minipage}{1.0\textwidth}
\subsection{SFL-V2 Algorithm}
\begin{algorithm}[H]
\caption{Split Federated Learning V2 (SFL-V2)}
\begin{algorithmic}[1]
\Require
    \State Clients $\mathcal{K} = \{1,\ldots,K\}$
    \State Communication rounds $T$
    \State Local epochs $E$
    \State Learning rate $\eta$
    \State Cut layer index $L_c$
    \State Local datasets $\{\mathcal{D}_k\}_{k=1}^K$
\Ensure
    \State Final client models $\{\theta^C_k(T)\}_{k=1}^K$
    \State Final server model $\theta^S(T)$

\State Initialize client models $\{\theta^C_k(0)\}_{k=1}^K$ and server model $\theta^S(0)$
\For{round $t = 0,\ldots,T-1$}
    \Statex \textbf{Client-side Parallel Processing:}
    \For{each client $k \in \mathcal{K}$ in parallel}
        \State Sample mini-batch $B_k \sim \mathcal{D}_k$
        \State Compute activations $a_k(t)$ through $\theta^C_k(t)$ up to $L_c$
        \State Send $(a_k(t), y_k(t))$ to Training Server
    \EndFor
    
    \Statex \textbf{Training Server Sequential Processing:}
    \State Generate random permutation $\pi$ of clients $\mathcal{K}$
    \For{$k \in \pi$}
        \State Compute predictions: $\hat{y}_k(t) = \theta^S(t)(a_k(t))$
        \State Calculate loss: $\mathcal{L}(\hat{y}_k(t), y_k(t))$
        \State Update server model: $\theta^S(t) \gets \theta^S(t) - \eta\nabla\theta^S(t)$
        \State Compute and send gradients $\nabla a_k(t)$ to client $k$
    \EndFor
    
    \Statex \textbf{Client Backward Pass:}
    \For{each client $k \in \mathcal{K}$ in parallel}
        \For{epoch $e = 1,\ldots,E$}
            \State Backpropagate using $\nabla a_k(t)$
            \State Update client model: $\theta^C_k(t) \gets \theta^C_k(t) - \eta\nabla\theta^C_k(t)$
        \EndFor
    \EndFor
    
    \Statex \textbf{Model Synchronization:}
    \State $\theta^C(t+1) \gets \sum_{k=1}^K \frac{|\mathcal{D}_k|}{\sum_i |\mathcal{D}_i|} \theta^C_k(t)$
    \State Broadcast aggregated model to all clients for next round:
    \For{each client $k \in \mathcal{K}$}
        \State $\theta^C_k(t+1) \gets \theta^C(t+1)$
    \EndFor
\EndFor
\end{algorithmic}
\end{algorithm}
\end{minipage}
\end{center}

\section{Proof of Proposition \ref{SFL-V1-invariability}}\label{appendix: proof}
The proof mainly follows \cite{han2024convergence}. The key difference is that we specify how the choice of cut layer affects the model updates at the training server and client side. For ease of presentation, we put the communication round index in the superscript, and the client/server notations ($c/s$) in the subscripts. 

\subsection{Preliminary}
We start with a few lemmas that facilitate the proof. 

\begin{lemma}\label{lem:multiple-local-training}[Multiple iterations of local training in each round] Under Assumptions \ref{assump: smoothness}-\ref{assump: heterogeneity}, if we let $\eta^t\leq\frac{1}{\sqrt{8}S\tau}$ and run client $k$'s local model for $\tau$ iteration continuously in any round $t$. Then, for any cut layer selection $L_c\in \{1, 2,\cdots, L-1\}$, we have 
    \begin{align}
    &\sum_{i=0}^{\tau-1}  \mathbb{E} \left[ \left\Vert  \theta^{t,i}_{k}- \theta^{t}\right\Vert^2\right]
\leq 2\tau^2\left(8\tau\left(\eta^t\right)^2 \sigma_k^2 +8\tau\left(\eta^t\right)^2\epsilon^2
         +8\tau\left(\eta^t\right)^2\left\Vert\nabla_{\theta}f\left(\theta^{t}\right)\right\Vert^2\right).
\end{align}
\end{lemma}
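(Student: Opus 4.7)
The plan is to reduce the SFL-V1 local update to a standard FedAvg-style iteration and then invoke the classical one-round drift bound. First I would observe that within a round, client $k$'s client-side parameters $\theta^C_k$ are updated by $\nabla \theta^C_k$ and its dedicated server-side parameters $\theta^S_k$ are updated by $\nabla \theta^S_k$, both computed via the same backward pass from the loss $F_k$ on mini-batch $B_k$. Stacking them as $\theta_k = \{\theta^C_k;\theta^S_k\}$ gives $\theta^{t,i+1}_k = \theta^{t,i}_k - \eta^t g_k(\theta^{t,i}_k, B^{t,i}_k)$, and crucially this combined update is independent of where the cut layer $L_c$ is placed: $L_c$ only decides which coordinates are computed on the client versus on the training server, not the value of the gradient vector. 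This cut-layer invariance is exactly the structural fact the lemma needs, and every bound that follows is cut-layer free.

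Next I would unroll the recursion. Writing $\theta^{t,i}_k - \theta^t = -\eta^t \sum_{j=0}^{i-1} g_k(\theta^{t,j}_k, B^{t,j}_k)$ and applying Cauchy--Schwarz gives
\begin{equation}
\mathbb{E}\left\Vert \theta^{t,i}_k - \theta^t\right\Vert^2 \leq i\,(\eta^t)^2 \sum_{j=0}^{i-1} \mathbb{E}\left\Vert g_k(\theta^{t,j}_k,B^{t,j}_k)\right\Vert^2 .
\end{equation}
Using Assumption \ref{assump: stochastic gradient} I would split $\mathbb{E}\|g_k\|^2 \leq \sigma_k^2 + \|\nabla F_k(\theta^{t,j}_k)\|^2$, then decompose $\nabla F_k(\theta^{t,j}_k) = \nabla f(\theta^t) + (\nabla F_k(\theta^t)-\nabla f(\theta^t)) + (\nabla F_k(\theta^{t,j}_k)-\nabla F_k(\theta^t))$ and bound the three pieces via $\|a+b+c\|^2 \le 3(\|a\|^2+\|b\|^2+\|c\|^2)$, $S$-smoothness (Assumption \ref{assump: smoothness}), and the heterogeneity bound (Assumption \ref{assump: heterogeneity}). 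This yields
\begin{equation}
\mathbb{E}\left\Vert g_k(\theta^{t,j}_k,B^{t,j}_k)\right\Vert^2 \leq \sigma_k^2 + 3\epsilon^2 + 3\left\Vert\nabla f(\theta^t)\right\Vert^2 + 3S^2\,\mathbb{E}\left\Vert \theta^{t,j}_k - \theta^t\right\Vert^2 .
\end{equation}
Summing over $i=0,\dots,\tau-1$ and bounding $i \le \tau$ and $\sum_j \le \sum_{j=0}^{\tau-1}$ produces a self-referential inequality
\begin{equation}
\sum_{i=0}^{\tau-1}\mathbb{E}\left\Vert \theta^{t,i}_k - \theta^t\right\Vert^2 \leq \tau^2(\eta^t)^2\bigl(\sigma_k^2+3\epsilon^2+3\left\Vert\nabla f(\theta^t)\right\Vert^2\bigr) + 3S^2\tau^2(\eta^t)^2 \sum_{j=0}^{\tau-1}\mathbb{E}\left\Vert \theta^{t,j}_k - \theta^t\right\Vert^2 .
\end{equation}

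The main obstacle is that this inequality is recursive in the quantity we want to bound. The way to remove the recursion is to use the hypothesis $\eta^t \leq \tfrac{1}{\sqrt{8}\,S\tau}$, which guarantees $3S^2\tau^2(\eta^t)^2 \leq \tfrac{3}{8} < 1$, so the recursive term can be moved to the left-hand side and absorbed into a constant factor. Regrouping constants, re-bounding $\sigma_k^2+3\epsilon^2+3\|\nabla f(\theta^t)\|^2$ by the looser $8\sigma_k^2 + 8\epsilon^2 + 8\|\nabla f(\theta^t)\|^2$, and then multiplying by $2$ to account for the absorption step gives exactly the stated bound
\begin{equation}
\sum_{i=0}^{\tau-1}\mathbb{E}\left\Vert \theta^{t,i}_k - \theta^t\right\Vert^2 \leq 2\tau^2\bigl(8\tau(\eta^t)^2\sigma_k^2 + 8\tau(\eta^t)^2\epsilon^2 + 8\tau(\eta^t)^2\left\Vert\nabla f(\theta^t)\right\Vert^2\bigr),
\end{equation}
completing the proof. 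The whole argument is cut-layer agnostic because the only place $L_c$ could have entered is the stochastic gradient expression, and we already showed that $g_k(\theta^{t,i}_k,B^{t,i}_k)$ is the full-model gradient regardless of $L_c$.
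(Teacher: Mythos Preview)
Your argument is correct and reaches the stated bound, but it follows a different route from the paper. The paper builds a one-step recursion: it writes $\theta^{t,i}_k-\theta^t=(\theta^{t,i-1}_k-\theta^t)-\eta^t g_k^{t,i-1}$, splits the two pieces via the Young-type inequality $(X+Y)^2\le(1+\tfrac1\tau)X^2+(1+\tau)Y^2$, and after the same four-term gradient decomposition obtains $A_{t,i}\le(1+\tfrac2\tau)A_{t,i-1}+B$; the step-size condition is used to fold $8\tau(\eta^t)^2S^2$ into the extra $1/\tau$. It then unrolls and sums the resulting geometric series, using $(1+\tfrac2\tau)^\tau\le e^2$ to get $\sum_i A_{t,i}\le 2\tau^2 B$. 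Your approach instead unrolls the iterate completely, applies Cauchy--Schwarz on the full sum of $i$ gradients, and obtains a self-referential inequality directly on the accumulated drift $\sum_i A_{t,i}$, which you then resolve by absorption. This is more elementary (no geometric-series bookkeeping) and gives the same order; the paper's one-step recursion is slightly more reusable if one later needs per-iterate control.

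One small slip to fix: in your displayed self-referential inequality the constant term should be $\tau^3(\eta^t)^2\bigl(\sigma_k^2+3\epsilon^2+3\Vert\nabla f(\theta^t)\Vert^2\bigr)$, not $\tau^2(\eta^t)^2(\cdots)$. The extra $\tau$ comes from $\sum_{j=0}^{\tau-1}$ of the $j$-independent part after you replace $\sum_{j=0}^{i-1}$ by $\sum_{j=0}^{\tau-1}$. With that correction your absorption step (coefficient $3S^2\tau^2(\eta^t)^2\le 3/8$, hence multiply by at most $2$) and your loosening of the constants to $8$ give exactly $16\tau^3(\eta^t)^2(\sigma_k^2+\epsilon^2+\Vert\nabla f(\theta^t)\Vert^2)$, which is the lemma's bound. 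Your opening reduction---stacking $\theta_k=\{\theta^C_k;\theta^S_k\}$ so that the SFL-V1 update is a single full-model SGD step independent of $L_c$---is precisely the structural fact the paper relies on as well.
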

\begin{proof}
\begin{align}
    & \mathbb{E} \left[ \left\Vert \theta^{t,i}_{k}- \theta^{t}\right\Vert^2\right] \nonumber\\
       & \leq\mathbb{E} \left[ \left\Vert \theta^{t,i-1}_{k} - \eta^t \boldsymbol{g}_{k}^{t,i-1}- \theta^{t}\right\Vert^2\right] \nonumber\\
        & \leq \mathbb{E} \left[ \left\Vert \theta^{t,i-1}_{k} - \theta^{t} - \eta^t \left(\boldsymbol{g}_{k}^{t,i-1}-\nabla_{\theta}F_k\left(\theta^{t,i-1}_{k}\right)\right.\right.\right.\nonumber\\
        &\left.\left.\left.+\nabla_{\theta}F_k\left(\theta^{t,i-1}_{k}\right)-\nabla_{\theta}F_k\left(\theta^{t}\right)+\nabla_{\theta}F_k\left(\theta^{t}\right)-\nabla_{\theta}f\left(\theta^{t}\right)+\nabla_{\theta}f\left(\theta^{t}\right)\right)\right\Vert^2\right] \nonumber\\
        &\leq \left(1+\frac{1}{\tau}\right)\mathbb{E}\left[\left\Vert\theta^{t,i-1}_{k} - \theta^{t}  \right\Vert^2\right]+8\tau\mathbb{E}\left[\left\Vert\eta^t \left(\boldsymbol{g}_{k}^{t,i-1}-\nabla_{\theta}F_k\left(\theta^{t,i-1}_{k}\right)\right)\right\Vert^2\right]\nonumber\\
        &+8\tau\mathbb{E}\left[\left\Vert\eta^t \left(\nabla_{\theta}F_k\left(\theta^{t,i-1}_{k}\right)-\nabla_{\theta}F_k\left(\theta^{t}\right)\right)\right\Vert^2\right]
        +8\tau\mathbb{E}\left[\left\Vert\eta^t \left(\nabla_{\theta}F_k\left(\theta^{t}\right)-\nabla_{\theta}f\left(\theta^{t}\right)\right)\right\Vert^2\right]\nonumber\\
         &+8\tau\left\Vert\eta^t \nabla_{\theta}f\left(\theta^{t}\right)\right\Vert^2
       \nonumber\\
       &\leq \left(1+\frac{1}{\tau}\right)\mathbb{E}\left[\left\Vert\theta^{t,i-1}_{k} - \theta^{t}  \right\Vert^2\right]+8\tau\left(\eta^t\right)^2 \sigma_k^2 +8\tau\left(\eta^t\right)^2S^2\mathbb{E}\left[\left\Vert\theta^{t,i-1}_{k}-\theta^{t}\right\Vert^2\right]+8\tau\left(\eta^t\right)^2\epsilon^2\nonumber\\
         &+8\tau\left(\eta^t\right)^2\left\Vert\nabla_{\theta}f\left(\theta^{t}\right)\right\Vert^2
       \nonumber\\
         &\leq\!\left(\!1\!+\!\frac{1}{\tau}\!+\!8\tau\left(\!\eta^t\!\right)^2S^2\right)\mathbb{E}\!\left[\!\left\Vert\theta^{t,i-1}_{k}\!-\!\theta^{t}  \right\Vert^2\!\right]\!+\!8\tau\left(\eta^t\right)^2\!\sigma_k^2\!+\!8\tau\left(\!\eta^t\!\right)^2\epsilon^2\!+\!8\tau\left(\eta^t\right)^2\left\Vert\!\nabla_{\theta}f\left(\theta^{t}\!\right)\!\right\Vert^2\nonumber\\
          &\leq \left(1+\frac{2}{\tau}\right)\mathbb{E}\left[\left\Vert\theta^{t,i-1}_{k} - \theta^{t}  \right\Vert^2\right]+8\tau\left(\eta^t\right)^2 \sigma_k^2 +8\tau\left(\eta^t\right)^2\epsilon^2
         +8\tau\left(\eta^t\right)^2\left\Vert\nabla_{\theta}f\left(\theta^{t}\right)\right\Vert^2
\end{align}
where we have applied Assumptions \ref{assump: smoothness}-\ref{assump: heterogeneity}, $\left(X+Y\right)^2\leq\left(1+a\right)X^2+\left(1+\frac{1}{a}\right)Y^2$ for some positive $a$, and $\eta^t\leq\frac{1}{\sqrt{8}S\tau}$.

Let
\begin{align}
    &A_{t,i}:=\mathbb{E}\left[\left\Vert\theta^{t,i}_{k} - \theta^{t}  \right\Vert^2\right]\nonumber\\
    &B:=8\tau\left(\eta^t\right)^2 \sigma_k^2 +8\tau\left(\eta^t\right)^2\epsilon^2
         +8\tau\left(\eta^t\right)^2\left\Vert\nabla_{\theta}f\left(\theta^{t}\right)\right\Vert^2\nonumber\\
    &C:=1+\frac{2}{\tau}\nonumber
\end{align}

We have 
\begin{align}
    A_{t,i}\leq CA_{t,i-1}+B
\end{align}
We can show that
\begin{align}
   & A_{t,i}\leq C^iA_{t}+B\sum_{j=0}^{i-1}C^{j}\nonumber
\end{align}

Note that $A_{t}=\mathbb{E}\left[\left\Vert \theta^{t}-\theta^{t}\right\Vert^2\right]=0$. Accumulate the above for $\tau$ iterations, we have
\begin{align}
    &\sum_{i=0}^{\tau-1}  \mathbb{E} \left[ \left\Vert  \theta^{t,i}_{k}- \theta^{t}\right\Vert^2\right] =\sum_{i=0}^{\tau-1} B\sum_{j=0}^{i-1}C^{j} \nonumber\\
   & \leq 2\tau^2B \nonumber\\
    & \leq 2\tau^2\left(8\tau\left(\eta^t\right)^2 \sigma_k^2 +8\tau\left(\eta^t\right)^2\epsilon^2
         +8\tau\left(\eta^t\right)^2\left\Vert\nabla_{\theta}f\left(\theta^{t}\right)\right\Vert^2\right)
\end{align}
where we use $\sum_{i=0}^{N-1}x^i=\frac{x^N-1}{x-1}$ and $ (1+\frac{n}{x})^x\leq e^n$.
Therefore, we complete the proof.
\end{proof}

\begin{lemma}\label{lem:multiple-local-training-grad}[Multiple iterations of local gradient accumulation in each round] Under Assumption \ref{assump: smoothness}-\ref{assump: heterogeneity}, if we let $\eta^t\leq\frac{1}{2S\tau}$ and run client $k$'s local model for $\tau$ iterations continuously in any round $t$. Then, for any cut layer selection $L_c\in \{1, 2,\cdots, L-1\}$, we have  
\begin{align}
    \sum_{i=0}^{\tau-1}\mathbb{E}\left[\left\Vert \boldsymbol{g}^{t,i}_{k}-\boldsymbol{g}_{k}^t\right\Vert^2\right] \leq 8\tau^3\left(\eta^t\right)^2S^2\left(\left\Vert \nabla_{\theta}F_k\left(\theta^{t}\right)\right\Vert^2+\sigma_k^2\right). 
\end{align}
\end{lemma}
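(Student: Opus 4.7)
The plan is to bound the per-iteration gradient drift $\|g_k^{t,i}-g_k^t\|^2$ via the $S$-smoothness of $F_k$, which reduces the problem to controlling the iterate drift $\|\theta_k^{t,i}-\theta^t\|^2$, and then to control the latter using the local update recursion $\theta_k^{t,i}=\theta^t-\eta^t\sum_{j=0}^{i-1}g_k^{t,j}$. Crucially, all the quantities involved live on a single client $k$, so the heterogeneity Assumption \ref{assump: heterogeneity} and the global gradient $\nabla f$ are never invoked, which is what distinguishes this bound from Lemma \ref{lem:multiple-local-training} and explains the cleaner RHS (only $\|\nabla F_k\|^2$ and $\sigma_k^2$ appear, no $\epsilon^2$).

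First, I would apply Assumption \ref{assump: smoothness} to pass from the gradient drift to the iterate drift, $\|g_k^{t,i}-g_k^t\|^2\le S^2\|\theta_k^{t,i}-\theta^t\|^2$, treating $g_k^{t,i}-g_k^t$ as the smoothness-induced difference $\nabla F_k(\theta_k^{t,i})-\nabla F_k(\theta^t)$ with stochastic noise routed through the iterate. Second, using Jensen's inequality on the update recursion I would write $\|\theta_k^{t,i}-\theta^t\|^2\le\tau(\eta^t)^2\sum_{j=0}^{\tau-1}\|g_k^{t,j}\|^2$. Third, I would decompose each stochastic-gradient second moment by Assumption \ref{assump: stochastic gradient} and smoothness as $\mathbb{E}\|g_k^{t,j}\|^2\le\sigma_k^2+\|\nabla F_k(\theta_k^{t,j})\|^2\le\sigma_k^2+2\|\nabla F_k(\theta^t)\|^2+2S^2U_j$, where $U_j:=\mathbb{E}\|\theta_k^{t,j}-\theta^t\|^2$.

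Summing the resulting inequality for $U_i$ over $i=0,\ldots,\tau-1$ and writing $V_\tau:=\sum_i U_i$ produces a self-referential bound of the form $V_\tau\le\tau^3(\eta^t)^2\bigl(\sigma_k^2+2\|\nabla F_k(\theta^t)\|^2\bigr)+2S^2\tau^2(\eta^t)^2 V_\tau$. The step-size constraint $\eta^t\le\tfrac{1}{2S\tau}$ forces $2S^2\tau^2(\eta^t)^2\le\tfrac{1}{2}$, so the recursive term can be absorbed on the left to yield $V_\tau\le 4\tau^3(\eta^t)^2\bigl(\|\nabla F_k(\theta^t)\|^2+\sigma_k^2\bigr)$ after a routine constant consolidation (using $2\sigma_k^2+4\|\nabla F_k\|^2\le 4(\sigma_k^2+\|\nabla F_k\|^2)$). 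Multiplying through by $S^2$ in the first step and allowing one more factor-of-two slack delivers the claimed $8\tau^3(\eta^t)^2 S^2\bigl(\|\nabla F_k(\theta^t)\|^2+\sigma_k^2\bigr)$.

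The main obstacle is the need to separate the smoothness-driven drift from the stochastic noise cleanly enough that $\sigma_k^2$ only appears multiplied by $\tau^3(\eta^t)^2 S^2$, rather than as a standalone $\tau\sigma_k^2$ term that could dominate when $\eta^t$ is small. This hinges on treating $g_k^{t,i}-g_k^t$ as a smoothness difference (with the mini-batch randomness already absorbed into the iterate path $\theta_k^{t,i}$ rather than double-counted), and on closing the self-referential inequality under the tighter step size $\eta^t\le\tfrac{1}{2S\tau}$, which is precisely what yields the constant 8 in the final bound.
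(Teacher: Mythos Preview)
Your proposal is correct but follows a different route from the paper. The paper builds a one-step recursion directly on $A_{t,i}:=\mathbb{E}\|g_k^{t,i}-g_k^t\|^2$ by telescoping $g_k^{t,i}-g_k^t=(g_k^{t,i}-g_k^{t,i-1})+(g_k^{t,i-1}-g_k^t)$, applying Young's inequality with parameter $\tau$, and using $\|g_k^{t,i}-g_k^{t,i-1}\|\le S\eta^t\|g_k^{t,i-1}\|$; this yields $A_{t,i}\le(1+\tfrac{2}{\tau})A_{t,i-1}+2(1+\tau)(\eta^t)^2S^2\mathbb{E}\|g_k^t\|^2$, which is unrolled and summed via a geometric-series argument exactly as in Lemma~\ref{lem:multiple-local-training}. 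Your approach instead passes immediately to the iterate drift $U_i=\mathbb{E}\|\theta_k^{t,i}-\theta^t\|^2$, bounds $U_i$ by $\tau(\eta^t)^2\sum_j\mathbb{E}\|g_k^{t,j}\|^2$, and closes a self-referential inequality on $V_\tau=\sum_i U_i$. Both arguments invoke the same looseness (treating the stochastic-gradient difference as if governed by $S$-smoothness) and the same step-size constraint $\eta^t\le\tfrac{1}{2S\tau}$; your route avoids the geometric-series bookkeeping and in fact lands on $4\tau^3(\eta^t)^2S^2(\|\nabla F_k(\theta^t)\|^2+\sigma_k^2)$ before relaxing to the stated constant $8$, whereas the paper's telescoping delivers $8$ directly.
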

\begin{proof}
\begin{align}
    &\mathbb{E}\left[\left\Vert \boldsymbol{g}^{t,i}_{k}-\boldsymbol{g}_{k}^t\right\Vert^2\right]\nonumber\\
    &\leq \mathbb{E}\left[\left\Vert \boldsymbol{g}^{t,i}_{k}-\boldsymbol{g}_{k}^{t,i-1}+\boldsymbol{g}_{k}^{t,i-1}-\boldsymbol{g}_{k}^t\right\Vert^2\right]\nonumber\\
    & \leq \left(1+\tau\right)\mathbb{E}\left[\left\Vert \boldsymbol{g}^{t,i}_{k}-\boldsymbol{g}_{k}^{t,i-1}\right\Vert^2\right]+ 
    \left(1+\frac{1}{\tau}\right)\mathbb{E}\left[\left\Vert \boldsymbol{g}_{k}^{t,i-1}-\boldsymbol{g}_{k}^t\right\Vert^2\right] \nonumber\\
    & \leq \left(1+\tau\right)S^2\mathbb{E}\left[\left\Vert \theta^{t,i}_{k}-\theta^{t,i-1}_{k}\right\Vert^2\right]+ 
    \left(1+\frac{1}{\tau}\right)\mathbb{E}\left[\left\Vert \boldsymbol{g}_{k}^{t,i-1}-\boldsymbol{g}_{k}^t\right\Vert^2\right] \nonumber\\
    & \leq \left(1+\tau\right)\left(\eta^t\right)^2S^2\mathbb{E}\left[\left\Vert \boldsymbol{g}_{k}^{t,i-1}\right\Vert^2\right]+ 
    \left(1+\frac{1}{\tau}\right)\mathbb{E}\left[\left\Vert \boldsymbol{g}_{k}^{t,i-1}-\boldsymbol{g}_{k}^t\right\Vert^2\right] \nonumber\\
    & \leq \left(1+\tau\right)\left(\eta^t\right)^2S^2\mathbb{E}\left[\left\Vert \boldsymbol{g}_{k}^{t,i-1}-\boldsymbol{g}_{k}^t+\boldsymbol{g}_{k}^t\right\Vert^2\right]+ 
    \left(1+\frac{1}{\tau}\right)\mathbb{E}\left[\left\Vert \boldsymbol{g}_{k}^{t,i-1}-\boldsymbol{g}_{k}^t\right\Vert^2\right] \nonumber \\
    & \leq 2\left(1+\tau\right)\left(\eta^t\right)^2S^2\mathbb{E}\left[\left\Vert \boldsymbol{g}_{k}^{t,i-1}-\boldsymbol{g}_{k}^t\right\Vert^2\right] + 2\left(1+\tau\right)\left(\eta^t\right)^2S^2\mathbb{E}\left[\left\Vert \boldsymbol{g}_{k}^t\right\Vert^2\right] \nonumber\\ 
    &+ \left(1+\frac{1}{\tau}\right)\mathbb{E}\left[\left\Vert \boldsymbol{g}_{k}^{t,i-1}-\boldsymbol{g}_{k}^t\right\Vert^2\right] \nonumber \\
    & \leq \left(1+\frac{2}{\tau}\right)\mathbb{E}\left[\left\Vert \boldsymbol{g}_{k}^{t,i-1}-\boldsymbol{g}_{k}^t\right\Vert^2\right] + 2\left(1+\tau\right)\left(\eta^t\right)^2S^2\mathbb{E}\left[\left\Vert \boldsymbol{g}_{k}^t\right\Vert^2\right] .
\end{align}

We define the following notation for simplicity:
\begin{align}
    &A_{t,i}:= \mathbb{E}\left[\left\Vert \boldsymbol{g}^{t,i}_{k}-\boldsymbol{g}_{k}^t\right\Vert^2\right]\\
    &B:=2\left(1+\tau\right)\left(\eta^t\right)^2S^2\mathbb{E}\left[\left\Vert \boldsymbol{g}_{k}^t\right\Vert^2\right]\\
    &C:=\left(1+\frac{2}{\tau}\right)
\end{align}

We have 
\begin{align}
    A_{t,i}\leq CA_{t,i-1}+B
\end{align}
We can show that
\begin{align}
  & A_{t,i}\leq C^iA_{t}+B\sum_{j=0}^{i-1}C^{j}\nonumber
\end{align}

Note that $A_{t}=\mathbb{E}\left[\left\Vert \boldsymbol{g}_{k}^t-\boldsymbol{g}_{k}^t\right\Vert^2\right]=0$. For the second part, we have
\begin{align}
    &\sum_{i=0}^{\tau-1}\mathbb{E}\left[\left\Vert \boldsymbol{g}^{t,i}_{k}-\boldsymbol{g}_{k}^t\right\Vert^2\right] =\sum_{i=0}^{\tau-1} B\sum_{j=0}^{i-1}C^{j}  \leq 2\tau^2B \nonumber\\
    & \leq 4\tau^2\left(1+\tau\right)\left(\eta^t\right)^2S^2\mathbb{E}\left[\left\Vert \boldsymbol{g}_{k}^t\right\Vert^2\right]\nonumber\\
    & \leq 8\tau^3\left(\eta^t\right)^2S^2\mathbb{E}\left[\left\Vert \boldsymbol{g}_{k}^t\right\Vert^2\right]\nonumber\\
    & \leq 8\tau^3\left(\eta^t\right)^2S^2\left(\left\Vert \nabla_{\theta}F_k\left(\theta^{t}\right)\right\Vert^2+\sigma_k^2\right) .
\end{align}
\end{proof}

\subsection{Main server's model update}
We first analyze one-round model update at the main server side. For any cut layer selection $L_c\in \{1, 2,\cdots, L-1\}$, we have 
\begin{align}
    &\mathbb{E} \left[\left\langle\nabla_{\theta_s} f\left(\theta^{t}\right), \theta^{t+1}_s-\theta^{t}_s\right\rangle\right]\nonumber\\
    &\leq \mathbb{E} \left[\left\langle\nabla_{\theta_s} f\left(\theta^{t}\right), \theta^{t+1}_s-\theta^{t}_s +\eta^t \tau \nabla_{\theta_s} f\left(\theta^{t}\right)- \eta^t \tau \nabla_{\theta_s} f\left(\theta^{t}\right)\right\rangle\right]\nonumber\\
     &\leq \mathbb{E} \left[\left\langle\nabla_{\theta_s} f\left(\theta^{t}\right), \theta^{t+1}_s-\theta^{t}_s +\eta^t \tau \nabla_{\theta_s} f\left(\theta^{t}\right)\right\rangle -\left\langle f\left(\theta^{t}_s\right), \eta^t \tau \nabla_{\theta_s} f\left(\theta^{t}\right)\right\rangle\right]\nonumber\\
     &\leq \left\langle\nabla_{\theta_s} f\left(\theta^{t}\right), \mathbb{E} \left[-\eta^t\sum_{k=1}^K \sum_{i=0}^{\tau-1}\alpha_k\boldsymbol{g}_{s,k}^{t,i}\right]+\eta^t \tau \nabla_{\theta_s} f\left(\theta^{t}\right)\right\rangle -\eta^t \tau \left\Vert\nabla_{\theta_s} f\left(\theta^{t}\right)\right\Vert^2 \nonumber\\
     &\leq \left\langle\nabla_{\theta_s} f\left(\theta^{t}\right), \mathbb{E} \left[- \eta^t\sum_{k=1}^K \sum_{i=0}^{\tau-1}\alpha_k\nabla_{\theta_s} F_k\left(\left\{\theta^{t,i}_{c,k},\theta^{t,i}_{s,k}\right\}\right)\right]+\eta^t \tau \nabla_{\theta_s} f\left(\theta^{t}\right)\right\rangle -\eta^t \tau \left\Vert\nabla_{\theta_s} f\left(\theta^{t}\right)\right\Vert^2 \nonumber\\
     &\leq \left\langle\nabla_{\theta_s} f\left(\theta^{t}\right), \mathbb{E} \left[- \eta^t\sum_{k=1}^K \sum_{i=0}^{\tau-1} \alpha_k\nabla_{\theta_s} F_k\left(\left\{\theta^{t,i}_{c,k},\theta^{t,i}_{s,k}\right\}\right)+\eta^t\sum_{k=1}^K \sum_{i=0}^{\tau-1} \alpha_k\nabla_{\theta_s} F_k\left(\theta^{t}\right) \right]\right\rangle \nonumber\\
     &\quad -\eta^t \tau \left\Vert\nabla_{\theta_s} f\left(\theta^{t}\right)\right\Vert^2 \nonumber\\
      &\leq  \eta^t\tau\left\langle \nabla_{\theta_s} f\left(\theta^{t}\right),\mathbb{E} \left[ - \frac{1}{\tau}\sum_{k=1}^K \sum_{i=0}^{\tau-1}\alpha_k\nabla_{\theta_s} F_k\left(\left\{\theta^{t,i}_{c,k},\theta^{t,i}_{s,k}\right\}\right)+ \frac{1}{\tau}\sum_{k=1}^K \sum_{i=0}^{\tau-1} \alpha_k\nabla_{\theta_s} F_k\left(\theta^{t}\right) \right] \right\rangle
      \nonumber\\
      &\quad -\eta^t \tau \left\Vert\nabla_{\theta_s} f\left(\theta^{t}\right)\right\Vert^2 \nonumber\\
      &\leq \frac{\eta^t\tau}{2} \left\Vert\nabla_{\theta_s} f\left(\theta^{t}\right)\right\Vert^2 +\frac{\eta^t}{2\tau}\mathbb{E} \left[ \left\Vert \sum_{k=1}^K \sum_{i=0}^{\tau-1} \alpha_k\nabla_{\theta_s} F_k\left(\left\{\theta^{t,i}_{c,k},\theta^{t,i}_{s,k}\right\}\right)- \sum_{k=1}^K \sum_{i=0}^{\tau-1} \alpha_k\nabla_{\theta_s} F_k\left(\theta^{t}\right) \right\Vert^2\right]
      \nonumber\\
      &\quad -\eta^t \tau \left\Vert\nabla_{\theta_s} f\left(\theta^{t}\right)\right\Vert^2 \nonumber\\
       &\leq -\frac{\eta^t\tau}{2} \left\Vert\nabla_{\theta_s} f\left(\theta^{t}\right)\right\Vert^2 +\frac{K\eta^t}{2 \tau}\sum_{k=1}^K\alpha_k^2\mathbb{E} \left[ \left\Vert \sum_{i=0}^{\tau-1} \left(\nabla_{\theta_s} F_k\left(\left\{\theta^{t,i}_{c,k},\theta^{t,i}_{s,k}\right\}\right)- \nabla_{\theta_s} F_k\left(\theta^{t}\right) \right)\right\Vert^2\right]\nonumber\\
       &\leq -\frac{\eta^t\tau}{2} \left\Vert\nabla_{\theta_s} f\left(\theta^{t}\right)\right\Vert^2 +\frac{K\eta^t S^2}{2}\sum_{k=1}^K \alpha_k^2\sum_{i=0}^{\tau-1} \mathbb{E} \left[ \left\Vert \theta^{t,i}_{s,k}- \theta^{t}_s\right\Vert^2\right],\label{eq:server-non-full-1.1}
\end{align}
where we have used the fact that $\nabla_{\theta_s} f\left(\theta^{t}\right)=\sum_{k=1}^K \alpha_k\nabla_{\theta_s} F_k\left(\theta^{t}\right)$, and $\left\langle a,b\right\rangle\leq \frac{a^2+b^2}{2}$.

By Lemma \ref{lem:multiple-local-training} with $\eta^t\leq\frac{1}{\sqrt{8}S\tau}$, we have
\begin{align}
    &\sum_{i=0}^{\tau-1}  \mathbb{E} \left[ \left\Vert  \theta^{t,i}_{s,k}- \theta^{t}_s\right\Vert^2\right]\leq 2\tau^2\left(8\tau\left(\eta^t\right)^2 \sigma_k^2 +8\tau\left(\eta^t\right)^2\epsilon^2
         +8\tau\left(\eta^t\right)^2\left\Vert\nabla_{\theta_s}f\left(\theta^{t}_s\right)\right\Vert^2\right).
\end{align}

Thus, \eqref{eq:server-non-full-1.1} becomes
\begin{align}
    &\mathbb{E} \left[\left\langle\nabla_{\theta_s} f\left(\theta^{t}\right), \theta^{t+1}_s-\theta^{t}_s\right\rangle\right]\nonumber\\        
    &\leq\!-\frac{\eta^t\tilde{\tau}}{2} \left\Vert\!\nabla_{\theta_s} f\left(\theta^{t}\right)\!\right\Vert^2\!+\!\frac{N\eta^t S^2}{2}\!\sum_{n=1}^N\!a_n^22\tilde{\tau}^2\left(\!8\tilde{\tau}\left(\eta^t\right)^2 \sigma_n^2\!+\!8\tilde{\tau}\left(\eta^t\right)^2\epsilon^2\!+\!8\tilde{\tau}\left(\eta^t\right)^2\!\left\Vert\nabla_{\theta_s}\!f\left(\theta^{t}_s\right)\right\Vert^2\!\right)\nonumber\\
       &\leq \left(-\frac{\eta^t\tilde{\tau}}{2} +8N\left(\eta^t\right)^3\tilde{\tau}^3 S^2\sum_{n=1}^N a_n^2\right)\left\Vert\nabla_{\theta_s} f\left(\theta^{t}\right)\right\Vert^2 +8N\eta^t S^2\tilde{\tau}^3\sum_{n=1}^N a_n^2\left(\eta^t\right)^2\left( \sigma_n^2 +\epsilon^2
         \right).\label{eq:server-non-full-1.5}
\end{align}

Furthermore, we have
\begin{align}
    &\frac{S}{2}\mathbb{E}\left[\left\|  \theta^{t+1}_{s}- \theta^{t}_{s} \right\|^2 \right]\nonumber\\
    &=\frac{SK\left(\eta^t\right)^2}{2}\sum_{k=1}^K\mathbb{E}\left[\left\|  \sum_{i=0}^{\tau-1} \alpha_k\boldsymbol{g}^{t,i}_{s,k}\right\|^2 \right]\nonumber\\
    &\leq\frac{SK\left(\eta^t\right)^2}{2}\sum_{k=1}^K\alpha_k^2\mathbb{E}\left[\left\|\sum_{i=0}^{\tau-1} \boldsymbol{g}^{t,i}_{s,k}\right\|^2 \right]\nonumber\\
      &\leq\frac{SK\left(\eta^t\right)^2 \tau}{2}\sum_{k=1}^K\alpha_k^2\sum_{i=0}^{\tau-1} \mathbb{E}\left[\left\|\boldsymbol{g}^{t,i}_{s,k}\right\|^2 \right]\nonumber\\
      &\leq\frac{SK\left(\eta^t\right)^2 \tau}{2}\sum_{k=1}^K\alpha_k^2\sum_{i=0}^{\tau-1} \mathbb{E}\left[\left\|\boldsymbol{g}^{t,i}_{s,k}-\boldsymbol{g}_{s,k}^t+\boldsymbol{g}_{s,k}^t\right\|^2 \right]\nonumber\\
       &\leq \frac{SK\left(\eta^t\right)^2 \tau}{2}\sum_{k=1}^K\alpha_k^2\sum_{i=0}^{\tau-1} \left(\mathbb{E}\left[\left\|\boldsymbol{g}^{t,i}_{s,k}-\boldsymbol{g}_{s,k}^t\right\|^2\right]+\mathbb{E}\left[\left\|\boldsymbol{g}_{s,k}^t\right\|^2 \right]\right)\nonumber\\
       &\leq \frac{SK\left(\eta^t\right)^2 \tau}{2}\sum_{k=1}^K\alpha_k^2\sum_{i=0}^{\tau-1} \left(\mathbb{E}\left[\left\|\boldsymbol{g}^{t,i}_{s,k}-\boldsymbol{g}_{s,k}^t\right\|^2\right]+\mathbb{E}\left[\left\|\nabla_{\theta_s}F_k\left(\theta^{t}\right)\right\|^2+\sigma_k^2 \right]\right), \label{eq:server-non-full-1.3}
\end{align}
where the last line uses Assumption \ref{assump: smoothness}-\ref{assump: heterogeneity} and $\mathbb{E}\left[\|\mathbf{z}\|^2\right]=\|\mathbb{E}[\mathbf{z}]\|^2+\mathbb{E}[\| \mathbf{z}-\mathbb{E}[\mathbf{z}] \|^2]$ for any random variable $\mathbf{z}$.

By Lemma \ref{lem:multiple-local-training-grad} with $\eta^t\leq \frac{1}{2S\tau}$, we have 
\begin{align}
    \sum_{i=0}^{\tau-1}\mathbb{E}\left[\left\Vert \boldsymbol{g}^{t,i}_{s,k}-\boldsymbol{g}_{s,k}^t\right\Vert^2\right] \leq 8\tau^3\left(\eta^t\right)^2S^2\left(\left\Vert \nabla_{\theta_s}F_k\left(\theta^{t}\right)\right\Vert^2+\sigma_k^2\right). 
\end{align}
Thus,
\begin{align}
    &\frac{S}{2}\mathbb{E}\left[\left\|  \theta^{t+1}_{s}- \theta^{t}_{s} \right\|^2 \right]\nonumber\\
     &\leq\frac{SK\left(\eta^t\right)^2 \tau}{2}\sum_{k=1}^K\alpha_k^2\left(8\tau^3\left(\eta^t\right)^2S^2\left(\left\Vert \nabla_{\theta_s}F_k\left(\theta^{t}\right)\right\Vert^2+\sigma_k^2\right) 
+\tau\mathbb{E}\left[\left\|\nabla_{\theta_s}F_k\left(\theta^{t}\right)\right\|^2+\sigma_k^2 \right] \right)\nonumber\\
     &\leq\frac{SK\left(\eta^t\right)^2 \tau}{2}\sum_{k=1}^K\alpha_k^2\left(\tau+8\tau^3\left(\eta^t\right)^2S^2\right)\left(\left\Vert \nabla_{\theta_s}F_k\left(\theta^{t}\right)\right\Vert^2+\sigma_k^2\right)\nonumber\\
    &\leq\frac{SK\left(\eta^t\right)^2 \tau}{2}\sum_{k=1}^K\alpha_k^2\left(\tau+8\tau^3\left(\eta^t\right)^2S^2\right)\left(\left\Vert \nabla_{\theta_s}F_k\left(\theta^{t}\right)-\nabla_{\theta_s}f\left(\theta^{t}\right)+\nabla_{\theta_s}f\left(\theta^{t}\right)\right\Vert^2+\sigma_k^2\right)\nonumber\\
     &\leq\frac{SK\left(\eta^t\right)^2 \tau}{2}\sum_{k=1}^K\alpha_k^2\left(\tau+8\tau^3\left(\eta^t\right)^2S^2\right)\left(2\left\Vert \nabla_{\theta_s}f\left(\theta^{t}\right)\right\Vert^2+2\epsilon^2+\sigma_k^2\right).
     \label{eq:server-non-full-1.4}
\end{align}

\subsection{Clients' model update} 
The analysis of the client-side model update is similar to the server. For any cut layer selection $L_c\in \{1, 2,\cdots, L-1\}$, we have 
\begin{align}
    &\mathbb{E} \left[\left\langle\nabla_{\theta_c} f\left(\theta^{t}\right), \theta^{t+1}_c-\theta^{t}_c\right\rangle\right]\nonumber\\        
    &\leq \left(-\frac{\eta^t\tau}{2} +8K\left(\eta^t\right)^3\tau^3 S^2\sum_{k=1}^K \alpha_k^2\right)\left\Vert\nabla_{\theta_c} f\left(\theta^{t}\right)\right\Vert^2 +8K\eta^t S^2\tau^3\sum_{k=1}^K \alpha_k^2\left(\eta^t\right)^2\left( \sigma_k^2 +\epsilon^2
         \right).\label{eq:client-non-full-1.5}
\end{align}

For $\eta^t\leq \frac{1}{2S\tau}$,
\begin{align}
    &\frac{S}{2}\mathbb{E}\left[\left\|  \theta^{t+1}_{c}- \theta^{t}_{c} \right\|^2 \right]\nonumber\\
     &\leq\frac{SK\left(\eta^t\right)^2 \tau}{2}\sum_{k=1}^K\alpha_k^2\left(\tau+8\tau^3\left(\eta^t\right)^2S^2\right)\left(2\left\Vert \nabla_{\theta_c}f\left(\theta^{t}\right)\right\Vert^2+2\epsilon^2+\sigma_k^2\right).
     \label{eq:client-non-full-1.4}
\end{align}

\subsection{Superposition of main server and clients}
For any cut layer selection $L_c\in \{1, 2,\cdots, L-1\}$, we have 
\begin{align}
&\mathbb{E}\left[f\left(\theta^{t+1}\right)\right]-f\left(\theta^{t}\right)\nonumber\\
& \leq \mathbb{E}\left[\left\langle\nabla_{\theta_c} f\left(\theta^{t}\right), \theta^{t+1}_c-\theta^{t}_c\right\rangle\right]+\frac{S}{2}\mathbb{E}\left[\left\|\theta^{t+1}_c-\theta^{t}_c\right\|^2\right]+\mathbb{E}\left[\left\langle\nabla_{\theta_s} f\left(\theta^{t}\right), \theta^{t+1}_s-\theta^{t}_s\right\rangle\right]\nonumber\\
&+\frac{S}{2}\mathbb{E}\left[\left\|\theta^{t+1}_s-\theta^{t}_s\right\|^2\right]\nonumber\\
& \leq\left(-\frac{\eta^t\tau}{2}+8K\left(\eta^t\right)^3S^2\tau^3\sum_{k=1}^K\alpha_k^2+SK\left(\eta^t\right)^2\tau\sum_{k=1}^K\alpha_k^2\left(\tau+8\tau^3\left(\eta^t\right)^2S^2\right)\right)\left\Vert\nabla_{\theta}f\left(\theta^{t}\right)\right\Vert^2 \nonumber\\
   &+8K\eta^t S^2\tau^3\sum_{k=1}^K\alpha_k^2\left(\eta^t\right)^2\left( \sigma_k^2 +\epsilon^2 \right) \nonumber\\
   &+\frac{1}{2}SK\left(\eta^t\right)^2\tau\left(\tau+8\tau^3\left(\eta^t\right)^2S^2\right)\sum_{k=1}^K\alpha_k^2\left(2\epsilon^2+\sigma_k^2\right)\nonumber\\
   &\leq\left(-\frac{\eta^t\tau}{2}+SK\left(\eta^t\right)^2\tau^2\sum_{k=1}^K\alpha_k^2+8K\left(\eta^t\right)^3S^2\tau^3\sum_{k=1}^K\alpha_k^2+8S^3K\left(\eta^t\right)^4\tau^4\sum_{k=1}^K\alpha_k^2\right)\left\Vert\nabla_{\theta}f\left(\theta^{t}\right)\right\Vert^2 \nonumber\\
   &+8K\left(\eta^t\right)^3 S^2\tau^3\sum_{k=1}^K\alpha_k^2\sigma_k^2 +8K\left(\eta^t\right)^3 S^2\tau^3\epsilon^2\sum_{k=1}^K\alpha_k^2 \nonumber\\
   &+SK\left(\eta^t\right)^2\tau^2\epsilon^2\sum_{k=1}^K\alpha_k^2
   +\frac{1}{2}SK\left(\eta^t\right)^2\tau^2\sum_{k=1}^K\alpha_k^2\sigma_k^2+8KS^3\left(\eta^t\right)^4\tau^4\epsilon^2\sum_{k=1}^K\alpha_k^2
   +4KS^3\left(\eta^t\right)^4\tau^4\sum_{k=1}^K\alpha_k^2\sigma_k^2\nonumber\\
   & \leq-\frac{\eta^t\tau}{2}\left(1-4KS\eta^t \frac{\tau^2}{\tau}\sum_{k=1}^K\alpha_k^2\right)\left\Vert\nabla_{\theta}f\left(\theta^{t}\right)\right\Vert^2 +2KS\left(\eta^t\right)^2\tau^2\sum_{k=1}^K\alpha_k^2\left(\sigma_k^2+\epsilon^2\right)\nonumber\\
   & \leq-\frac{\eta^t\tau}{4}\left\Vert\nabla_{\theta}f\left(\theta^{t}\right)\right\Vert^2 +2KS\left(\eta^t\right)^2\tau^2\sum_{k=1}^K\alpha_k^2\left(\sigma_k^2+\epsilon^2\right),
\end{align}
where we first let $\eta^t \leq \frac{1}{16S\tau}$ and then let $\eta^t \leq \frac{1}{8SK\frac{\tau^2}{\tau}\sum_{k=1}^K\alpha_k^2}$. We also use $\left\Vert \nabla_{\theta}f\left(\theta^{t}\right)\right\Vert^2=\left\Vert \nabla_{\theta_c}f\left(\theta^{t}\right)\right\Vert^2+\left\Vert \nabla_{\theta_s}f\left(\theta^{t}\right)\right\Vert^2$.

Rearranging the above, we have
\begin{align}
    &\eta^t\left\Vert\nabla_{\theta}f\left(\theta^{t}\right)\right\Vert^2\leq \frac{4}{\tau}\left(f\left(\theta^{t}\right)-\mathbb{E} \left[f\left(\theta^{t+1}_s\right)\right]\right)+8KS\left(\eta^t\right)^2 \frac{2\tau^2}{\tau}  \sum_{k=1}^K\alpha_k^2\left(\sigma_k^2+\epsilon^2\right).
\end{align}
Taking expectation and averaging over all $t$, we have
\begin{align}
    &\frac{1}{T}\sum_{t=0}^{T-1}\eta^t\mathbb{E}\left[\left\Vert\nabla_{\theta}f\left(\theta^{t}\right)\right\Vert^2\right]\leq \frac{4}{T\tau}\left(f\left(\theta(0)\right)-f^\ast\right)+\frac{16KS\tau}{T} \sum_{k=1}^K\alpha_k^2\left(\sigma_k^2+\epsilon^2\right)\sum_{t=0}^{T-1}\left(\eta^t\right)^2.
\end{align}
Hence, we finish the proof of Proposition \ref{SFL-V1-invariability}.
\end{appendix}

\end{document}